\def\ps@headings{%
	\def\@oddhead{\mbox{}\scriptsize\rightmark \hfil \thepage}%
	\def\@evenhead{\scriptsize\thepage \hfil \leftmark\mbox{}}%
	\def\@oddfoot{}%
	\def\@evenfoot{}}
\newcommand{\ignore}[1]{}
\newcommand{\specialcell}[2][c]{%
	\begin{tabular}[#1]{@{}c@{}}#2\end{tabular}}
\mathchardef\mhyphen="2D
\newtheorem{definition}{Definition}
\newcommand{\algrule}[1][.2pt]{\par\vskip.5\baselineskip\hrule height #1\par\vskip.5\baselineskip}
\newcommand{\A}{$\mathcal{A}$}
\newcommand{\F}{$\mathcal{F}$}
\newcommand{\Rq}{\ensuremath \stackrel{\$}{\leftarrow}\mathbb{Z}_{q}^{*}{\xspace}}
\mathchardef\mhyphen="2D
\newcommand{\sk}{\ensuremath {\mathit{sk}}{\xspace}}
\newcommand{\pk}{\ensuremath {\mathit{PK}}{\xspace}}
\newcommand{\as}{\ensuremath {\leftarrow}{\xspace}}
\newcommand{\lhi}{\ensuremath {\mathcal{HL}_i}{\xspace}}
\newcommand{\lh}{\ensuremath {\mathcal{HL}_0}{\xspace}}
\newcommand{\lhp}{\ensuremath {\mathcal{HL}_1}{\xspace}}
\newcommand{\lhpp}{\ensuremath {\mathcal{HL}_2}{\xspace}}
\newcommand{\lm}{\ensuremath {\mathcal{LM}}{\xspace}}
\newcommand{\lw}{\ensuremath {\mathcal{LW}}{\xspace}}
\newcommand{\any}{\ensuremath { (\cdot)}{\xspace}}
\newcommand{\hsim}{\ensuremath {\mathit{H}\mhyphen\mathit{Sim}}{\xspace}}
\newcommand{\sgn}{\ensuremath {\mathit{SGN}}{\xspace}}
\newcommand{\sgnkg}{\ensuremath {\mathit{SGN.Kg}}{\xspace}}
\newcommand{\sgnsig}{\ensuremath {\mathit{SGN.Sig}}{\xspace}}
\newcommand{\sgnver}{\ensuremath {\mathit{SGN.Ver}}{\xspace}}
\newcommand{\EUCMA}{\ensuremath {\mathit{EU}\mhyphen\mathit{CMA}}{\xspace}}
\newcommand{\ro}{\ensuremath {\mathit{RO}(\cdot)}{\xspace}}
\newcommand{\advsch}{\ensuremath {\mathit{Adv}_{\sch}^{\EUCMA}(t',q'_s)}{\xspace}}
\newcommand{\sch}{\ensuremath {\texttt{Schnorr}}{\xspace}}
\newcommand{\schkg}{\ensuremath {\texttt{Schnorr.Kg}}{\xspace}}
\newcommand{\schsig}{\ensuremath {\texttt{Schnorr.Sig}}{\xspace}}
\newcommand{\schver}{\ensuremath {\texttt{Schnorr.Ver}}{\xspace}}
\newcommand{\params}{\ensuremath {\mathit{params}}{\xspace}}
\newcommand{\qs}{\ensuremath {\mathit{q_s}}{\xspace}}
\newcommand{\ulsv}{\ensuremath {\texttt{ESEM$_2$}}{\xspace}}
\newcommand{\uls}{\ensuremath {\texttt{ESEM}}{\xspace}}
\newcommand{\ulskg}{\ensuremath {\texttt{ESEM.Kg}}{\xspace}}
\newcommand{\ulssig}{\ensuremath {\texttt{ESEM.Sig}}{\xspace}}
\newcommand{\ulsver}{\ensuremath {\texttt{ESEM.Ver}}{\xspace}}
\newcommand{\node}{\ensuremath {\texttt{SNOD$\mhyphen$BPV}}{\xspace}}
\newcommand{\nodekg}{\ensuremath {\texttt{SNOD$\mhyphen$BPV.Offline}}{\xspace}}
\newcommand{\nodesig}{\ensuremath {\texttt{SNOD$\mhyphen$BPV.Sender}}{\xspace}}
\newcommand{\nodever}{\ensuremath {\texttt{SNOD$\mhyphen$BPV.Receiver}}{\xspace}}
\newcommand{\BPV}{\ensuremath {\mathit{BPV}}{\xspace}}
\newcommand{\BPVOff}{\ensuremath {\mathit{BPV.Offline}}{\xspace}}
\newcommand{\BPVOn}{\ensuremath {\mathit{BPV.Online}}{\xspace}}
\newcommand{\oh}{\ensuremath {\mathit{\overline{h}}}{\xspace}}
\newcommand{\os}{\ensuremath {\mathit{\overline{s}}}{\xspace}}
\newcommand{\tss}{\ensuremath {\mathit{\widetilde{s}}}{\xspace}}
\newcommand{\thh}{\ensuremath {\mathit{\widetilde{h}}}{\xspace}}
\newcommand{\txx}{\ensuremath {\mathit{\widetilde{x}}}{\xspace}}
\newcommand{\tsigma}{\ensuremath {\mathit{\widetilde{\sigma}}}{\xspace}}
\newcommand{\advuls}{\ensuremath {\mathit{Adv}_{\texttt{ESEM}}^{\EUCMA}(t,q_h,\qs)}{\xspace}}
\newcommand{\eat}[1]{}                
\newcommand{\zo}{{\{0,1\}}}
\newcommand{\m}{\ensuremath {m}{\xspace}}
\newcounter{linecounter}
\newtheorem{theorem}{Theorem}
\newtheorem{lemma}{Lemma}
\newtheorem{assumption}{Assumption}
\newcommand\blfootnote[1]{%
	\begingroup
	\renewcommand\thefootnote{}\footnote{#1}%
	\addtocounter{footnote}{-1}%
	\endgroup
}
\begin{document}


\title{Energy-Aware Digital Signatures for \\  Embedded Medical Devices}




\author{
	\IEEEauthorblockN{Muslum Ozgur Ozmen} 
	\IEEEauthorblockA{University of South Florida\\
		Tampa, Florida, USA \\
		ozmen@mail.usf.edu}
	\and
	\IEEEauthorblockN{Attila A. Yavuz}
	\IEEEauthorblockA{University of South Florida\\
		Tampa, Florida, USA \\
		attilaayavuz@usf.edu}
	\and
	\IEEEauthorblockN{Rouzbeh Behnia}
	\IEEEauthorblockA{University of South Florida\\
		Tampa, Florida, USA \\
		behnia@mail.usf.edu}}

\maketitle

\begin{abstract}


Authentication is vital for the Internet of Things (IoT) applications involving sensitive data (e.g., medical and financial systems).  Digital signatures offer scalable authentication with non-repudiation and public verifiability, which are necessary for auditing and dispute resolution in such IoT applications. However, digital signatures have been shown to be highly costly for low-end IoT devices, especially when embedded devices (e.g., medical implants) must operate without a battery replacement for a long time. 

We propose an Energy-aware Signature for Embedded Medical devices (\uls) that achieves near-optimal signer efficiency. \uls~signature generation does {\em not} require any costly operations (e.g., elliptic curve (EC) scalar multiplication/addition), but only a small constant-number of pseudo-random function calls, additions, and a single modular multiplication. \uls~has the smallest signature size among its EC-based counterparts with an identical private key size. We  achieve this by eliminating the use of the ephemeral public key (i.e, commitment) in Schnorr-type signatures from the signing via a distributed construction at the verifier without interaction with the signer while permitting a constant-size public key. We proved that \uls~is secure (in random oracle model), and fully implemented it on an 8-bit AVR microcontroller that is commonly used in medical devices. Our experiments showed that \uls~achieves 8.4$\times$ higher energy efficiency over its closest counterpart while offering a smaller signature and code size. Hence, \uls~can be suitable for  deployment on resource limited embedded devices in IoT. We open-sourced our software for public testing and wide-adoption.
\blfootnote{$\copyright$ 2019 IEEE. Personal use of this material is permitted. Permission from IEEE must be obtained for all other uses, in any current or future media, including reprinting/republishing this material for advertising or promotional purposes, creating new collective works, for resale or redistribution to servers or lists, or reuse of any copyrighted component of this work in other works.}

\end{abstract}

\begin{IEEEkeywords}
Lightweight Authentication, Internet of Things, Digital Signatures, Embedded Devices.
\end{IEEEkeywords}


\section{Introduction}\label{sec:Introduction}%
\begin{table*}[t!]
	\centering
	\caption{Signature generation performance of \uls~and its counterparts on AVR ATmega 2560 microcontroller} \label{tab:AVR}
	\vspace{-2mm}
	\begin{threeparttable}
		\begin{tabular}{| c || c | c | c |  c | c | c | }
			\hline
			\textbf{Scheme} & \textbf{CPU cycles} & \textbf{Signing Speed (ms)} & \textbf{Code Size (Byte)} & \textbf{Signature Size (Byte)} & \textbf{Private Key (Byte)} &  \textbf{CPU energy (mJ)}  \\ \hline \hline
			
			%
			%
			%
			%
			%
			%
			
			ECDSA & 79 185 664 & 4949 & 11 990 & 64  & 32 & 494.91 \\ \hline
			
			BPV-ECDSA & 23 519 232 & 1470 & 27 912 & 64 & 10 272 & 146.99 \\ \hline
			
			Ed25519 & 34 342 230 & 2146 & 17 373 & 64 & 32 & 214.64 \\ \hline 
			
			SchnorrQ & 5 174 800 &  323 & 29 894 & 64 & 32 & 32.34 \\ \hline \hline 
			
			\uls & \textbf {616 896} & \textbf {38} & 18 465 & \textbf {48} & 32 & \textbf {3.85} \\ \hline
		\end{tabular}
		\begin{tablenotes}[flushleft] \scriptsize{

				We use low-are implementations due to the memory constraints of ATmega 2560. Note that \uls~does not store any precomputed components (e.g., \BPV~tables)

			}
			
		\end{tablenotes}
	\end{threeparttable}
	\vspace{-5mm}
\end{table*}


It is essential to provide authentication and integrity services for the emerging Internet of Things (IoT) systems that include resource-constrained devices. Due to their computational efficiency, symmetric key primitives (e.g., message authentication codes) are usually preferred for such systems. On the other hand, these primitives might not be scalable for large and ubiquitous systems, and they also do not offer public verifiability and non-repudiation properties, which are essential for some IoT applications~\cite{MedicalDevice:SoK:SP:2014:Rushanan,Ozmen_IOT_SP,MedicalDevice:Survey:2015:Camara2015272}. For instance, in financial IoT applications and implantable medical devices, digital forensics (e.g., legal cases) need non-repudiation and public verifiability~\cite{Ozmen_IOT_SP,MedicalDevice:Survey:2015:Camara2015272,Repudiation:Survey}. Moreover, such systems may include many devices that require scalability. 

Digital signatures rely on public key infrastructures and offer scalable authentication with non-repudiation and public verifiability. Therefore, they are ideal authentication tools for the security of IoT applications. On the other hand, most of the compact digital signatures (e.g., elliptic curve (EC) based signatures) require costly operations such as EC scalar multiplication and addition during   signature generation. It has been shown~\cite{FeasibilityIoT_Crypto,Yavuz:2013:EET:2462096.2462108,BPV:Ateniese:Journal:ACMTransEmbeddedSys:2017}, and further demonstrated by our experiments that, these operations can be energy costly, and therefore, can negatively impact the battery life of highly resource-limited embedded devices. For instance, as one of the many potential applications, we can refer to a resource-limited sensor (e.g., a medical device~\cite{MedicalDevice:SoK:SP:2014:Rushanan}) that frequently generates and signs   sensitive data (medical readings), which are verified by a resourceful cloud service provider. There is a need for lightweight signatures that can meet the computation, memory and battery limitations of these IoT applications.

{\em The goal of this paper is to devise an energy-aware and compact digital signature scheme that can meet some of the stringent battery and memory requirements of highly resource-limited IoTs (e.g., implantable medical devices) that must operate for long periods of time with minimal intervention.} \\ \vspace{-3mm}


\noindent \textbf{Design Objectives}: {\em (i)} The signature generation should not require any costly operation (e.g., exponentiation, EC operations), but only symmetric cryptographic functions (e.g., pseudorandom functions) and basic arithmetics (e.g., modular addition)  {\em (ii)} The low-end devices are generally not only computation/battery but also memory limited. Hence, the objective (i) should be achieved without consorting precomputed storage (e.g., Boyko-Peinado-Venkatesan (\BPV) tables~\cite{BPV:basepaper:1998}, or online/offline signatures~\cite{OfflineOnline_ImprovedShamir_2001}). {\em (iii)} The signing should not draw new randomness~\cite{HighSpeedSignature:Bernstein:Journal2012} to avoid potential hurdles of weak pseudo-random number generators. {\em (iv)} The size of the signature should be small and constant-size as in Schnorr-like signatures. {\em (v)} The size of the public key should be constant.

\noindent \textbf{Our Contributions}: (i) We create an {\em Energy-aware Signature for Embedded Medical devices} (\uls), which is ideal for the signature generation on highly resource-limited IoT devices. We observe that the realizations of Schnorr-like signatures on efficient elliptic curves (e.g., FourQ~\cite{FourQ}) are currently the most efficient solutions, and a generation of the commitment value via a scalar multiplication is the main performance bottleneck in these schemes. Our main idea is to completely eliminate the generation, storage, and transmission of this commitment from the signing of Schnorr signature. To achieve this, we first develop a new algorithm that we call as {\em Signer NOn-interactive Distributed BPV} (\node), which permits a distributed construction of the commitment for a given signature at verifier's side, without requiring any interaction with the signer.  We then transform the signature generation process such that the correctness and provable security are preserved once the commitment value is separated from message hashing and \node~is incorporated into \uls. We present our proposed algorithms in Section \ref{sec:duls_proposed}. In Section \ref{sec:security}, we prove that \uls~is secure in the random oracle model~\cite{RandomOracleModel93} under a semi-honest distributed setting for \node.

(ii) We implemented \uls~and its counterparts both on an AVR ATmega 2560 microcontroller and a commodity hardware, and provided a detailed comparison in Section \ref{sec:PerformanceAnalysis}. We also conducted experiments to assess the battery consumption of \uls~and its counterparts when they are used with common IoT sensors (e.g., a pulse and pressure sensor). We make our implementation open-source for broad testing and adoption. \\ \vspace{-3mm}

\noindent \textbf{Desirable Properties of \uls}: We summarize the desirable properties of our scheme as follows (Table \ref{tab:AVR} gives a comparison of \uls~with its counterparts in terms of signing efficiency on 8-bit AVR processor):

$\bullet$~{\em \underline{Signing and Energy Efficiency}}:  The signature generation of \uls~does not require any EC operations (e.g., scalar multiplication, addition) or exponentiation, but only pseudo-random function (PRF) calls, modular additions and a single modular multiplication. Therefore, \uls~achieves the lowest energy consumption among their counterparts. For example,  \uls~consumes $8$$\times$ and $55$$\times$ less battery than SchnorrQ~\cite{FourQ}, and Ed25519~\cite{Ed25519}, respectively. Our experiments indicate that \uls~can substantially extend the battery life of low-end devices integrated with IoT applications (see Section \ref{sec:PerformanceAnalysis}). Similarly, \uls~is at least a magnitude of times faster than Ed25519 both in an 8-bit microcontroller and commodity hardware. This gap further  increases when our high-speed variant \ulsv~(introduced in Section~\ref{sec:duls_proposed}) is considered.

$\bullet$~{\em \underline{Small Private Key and Signature Sizes:}}  \uls~has the smallest signature size among its counterparts ($48$ Bytes for $\kappa = 128$) with an identical private key size. \uls~does not require any precomputation tables to be stored at the signer, and therefore it is significantly more storage and computation efficient than schemes relying on \BPV~at the signer's side. Moreover, \uls~has a small code size at the signer since it only requires symmetric primitives and basic arithmetics.

$\bullet$~{\em \underline{High Security:}} (i) Side-channel attacks exploiting the EC scalar multiplication implementations in ECDSA were proposed~\cite{DSAExponentiationCCS16}. Since \uls~does not require any EC operations at the signer, it is  not vulnerable to these types of attacks. (ii) The security of Schnorr-like signatures are sensitive to weak random number generators. The signing of \uls~does not consume new randomness (as in \cite{HighSpeedSignature:Bernstein:Journal2012}), and therefore can avoid these problems. (iii) We prove that \uls~is \EUCMA~secure in the random oracle model~\cite{RandomOracleModel93}. \\ \vspace{-3mm}

\noindent \textbf{Potential Use-cases}: In many IoT applications, extending the battery life of low-end processors (i.e., usually signers) is a priority, while verifiers generally use a commodity hardware (e.g., a server) with reasonable storage and communication capabilities. In particular, energy efficiency is a vital concern for embedded medical devices, as they are expected to operate reliably for long periods of time. Currently, symmetric cryptography is preferred to provide security for such devices~\cite{MedicalDevice:General:2011}. At the same time, the ability to produce publicly verifiable authentication tags with non-repudiation is desirable for medical systems~\cite{Ozmen_IOT_SP,MedicalDevice:Survey:2015:Camara2015272,Repudiation:Survey} (e.g., digital forensics and legal cases). Moreover, scalable integration of various medical apparatus to IoT realm will receive a significant benefit from the ability to deploy digital signatures on these devices~\cite{MedicalDevice:SoK:SP:2014:Rushanan}. \uls~takes a step towards meeting this need, as it is currently the most energy efficient alternative with small signature and private key sizes. Essentially, any IoT application involving energy/resource limited signers and more capable verifiers (e.g., wireless sensor networks and IoT sensors in smart cities) are expected to receive benefit from \uls. \\ \vspace{-3mm}

\noindent \textbf{Limitations}: The signature verification of \uls~is distributed, wherein a verifier reconstructs the commitment value of a signature with $l$ parties. Therefore, verification of \uls~is not real-time, and the verifier should wait for a response from all parties. However, as confirmed with our experiments, this only results in a few milliseconds of delay. Moreover, the signer does not need interaction with any parties to compute signatures. Parties aiding the verification are assumed to be semi-honest (do not deviate from the protocol, but try to learn information) and non-colluding (as in traditional semi-honest secure multi-party computation). In our case, even $(l-1)$ parties collude, \uls~remains EU-CMA secure. Since \uls~is designed for a near-optimal signer performance, we believe that \uls~is suitable for applications as outlined above, where a small delay and interaction can be tolerated at the verifier.

\section{Preliminaries and Models}\label{sec:prelim}%
We first give the notations and definitions used by our schemes, and then describe our system/security model. 
\subsection{Notation and Definitions} \label{subsec:Notation}

\noindent \textbf{Notation}:   $||$ and $|x|$ denote   concatenation and the bit length of variable $x$, respectively.  $x\stackrel{\$}{\leftarrow}\mathcal{S}$ means variable $x$ is randomly     selected from set $\mathcal{S}$. $|\mathcal{S}|$ denotes the cardinality of set $\mathcal{S}$. We denote by $\{0,1\}^{*}$ the set of binary strings of any finite length. The set of items $q_i$ for $i=0,\ldots,n-1$ is denoted by $\{q_i\}_{i=0}^{n-1}$. $\log{x}$ denotes $\log_{2}{x}$. $\mathcal{A}^{\mathcal{O}_0,\ldots,\mathcal{O}_{i}}(.)$ denotes
algorithm $\mathcal{A}$ is provided with oracles
$\mathcal{O}_0,\ldots,\mathcal{O}_{i}$. For example,
$\mathcal{A}^{\sgnsig_{sk}}(.)$ denotes algorithm
$\mathcal{A}$ is provided with a {\em signing oracle} of
algorithm $\mathit{Sig}$ of signature scheme $\sgn$ under a private key $sk$. We define a pseudo-random function (\texttt{PRF}) and three hash functions to be used in our schemes as follows:  $\ensuremath{PRF}_{0}:\zo^{*} \rightarrow \zo^{\kappa}$, $\ensuremath{H}_{0}:\zo^{*} \rightarrow \zo^{\kappa}$, $\ensuremath{H}_{1}:\zo^{*} \rightarrow \zo^{v\cdot \log{n}}$ and $\ensuremath{H}_{2}:\zo^{*}\rightarrow \mathbb{Z}_{q}^{*}$, where $1<v<n$ are BPV parameters and $\kappa$ is the security parameter.

\begin{definition} \label{Def:GenericSig}
	A signature scheme \sgn~is a tuple of three algorithms
	$(\mathit{Kg},\mathit{Sig},\mathit{Ver})$ defined as follows:
	\begin{enumerate}[-]
		\itemsep0.2em
		\item $\underline{(\sk,\pk)\leftarrow \sgnkg(1^{\kappa})}$: Given the security parameter $1^{\kappa}$, the key generation algorithm
		returns a private/public key pair $(\sk,\pk)$.
		\item $\underline{\sigma\leftarrow \sgnsig(\m,\sk)}$: The signing algorithm takes a message $\m$~and a \sk, and returns a
		signature $\sigma$.
		\item $\underline{b\leftarrow \sgnver(\m,\sigma,\pk)}$: The verification
		algorithm takes a message \m, signature $\sigma$ and the public key \pk~as input. It returns a bit $b$: $1$  means {\em valid} and $0$ means {\em invalid}.
	\end{enumerate}
\end{definition}

Our schemes are based on \sch~signature~\cite{Schnorr91}. 	
\begin{definition} \label{Def:Schnorr}
	\sch~signature scheme is a tuple of three algorithms
	$(\mathit{Kg},\mathit{Sig},\mathit{Ver})$ defined as follows:
	\begin{enumerate}[-]
		\itemsep0.2em
		\item $\underline{(y,Y) \leftarrow \schkg(1^{\kappa})}$: Given $1^{\kappa}$ as the input, 
		\begin{enumerate}[1)]
			\item The system-wide $\params\as(q,p,\alpha)$, where $q$ and $p$ are large primes such that $p>q$ and $q |(p-1)$, and a generator $\alpha$ of the subgroup $G$ of order $q$ in $\mathbb{Z}_{p}^{*}$.  
			\item  Generate private/public key pair $(y\stackrel{\$}{\leftarrow}\mathbb{Z}_{q}^{*},Y\leftarrow
			\alpha^{y} \bmod p)$. We suppress \params~afterwards for the brevity. 
		\end{enumerate}
		
		\item $\underline{(\sigma \leftarrow \schsig(m,y)}$: Given $m\in\{0,1\}^{*}$ and $y$ as the input, it returns a
		signature $\sigma=(s,e)$, where $H:\{0,1\}^{*} \rightarrow \mathbb{Z}_{q}^{*}$ is a full domain hash function.
		\begin{enumerate}[1)]
			\item $r\Rq,~~R\as \alpha^{r} \bmod p$.
			\item $e\as H(m||R),~s\as (r-e\cdot y) \bmod q$.
		\end{enumerate}
		
		\item $\underline{b\leftarrow \schver(m,\langle s,e \rangle,Y)}$: The signature verification algorithm takes
		$m$, $\langle s,e \rangle$ and $Y$ as the input.  It computes $R'\as Y^{e}\alpha^{s} \bmod p$ and returns a bit $b$, with $b=1$
		indicating {\em valid}, if $e=H(m||R')$ and $b=0$
		otherwise.
	\end{enumerate}
\end{definition}

We use Boyko-Peinado-Venkatesan (\BPV)~generator~\cite{BPV:basepaper:1998}. 

\begin{definition} \label{Def:BPV}
	The \BPV~generator is a tuple of two algorithms
	$(\mathit{Offline},\mathit{Online})$ defined as follows:
	\begin{enumerate}[-]
		\itemsep0.2em
		\item $\underline{(\Gamma,v,n,q,p) \as \BPVOff(1^{\kappa})}$: The offline \BPV~algorithm  takes $1^{\kappa}$ as
		the input and generate system-wide parameters $(q,p,\alpha)$ as in $\schkg(1^{\kappa})$.
		\begin{enumerate}[1)]
			\item \BPV~parameters $n$ and $v$ are the number of pairs to be precomputed and the number of elements to be randomly selected out $n$ pairs, respectively, for $2<v<n$.
			\item $r_i\Rq,$~~ $R_i \as \alpha^{r_i} \bmod p$, $i=0,\ldots, n-1$.
			\item Set precomputation table $\Gamma=\{r_i,R_i\}_{i=0}^{n-1}$.
		\end{enumerate}
		
		\item $\underline{(r,R) \as \BPVOn(\Gamma,v,n,q)}$: The online \BPV~algorithm takes the table $\Gamma$ and $(v,n,q)$ as input.
		\begin{enumerate}[1)]
			\item Generate a random set $S \subset [0,n-1]$, where $|S| = v$.
			\item $r \as \sum_{i \in S}^{} r_i \bmod q$, $R \as \prod_{i \in S}^{} R_i$.
		\end{enumerate}
		
	\end{enumerate}
\end{definition}

\begin{lemma} \label{lem:BPVOutputRandom}
	The distribution of \BPV output $r$ is {\em statistically close} to the uniform random distribution with an appropriate choice
	of parameters $(v,n)$ \cite{BPV:basepaper:1998}.
\end{lemma}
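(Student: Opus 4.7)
The plan is to show that, over the joint randomness of the precomputation table $\Gamma=\{r_i\}_{i=0}^{n-1}$ and the random subset $S\subset[0,n-1]$ of size $v$, the output $r=\sum_{i\in S}r_i\bmod q$ is statistically indistinguishable from the uniform distribution on $\mathbb{Z}_q$, provided $(v,n)$ are chosen so that the subset has enough entropy relative to $\log q$. The strategy is to view the map that sends $S$ to $r$ as a universal hash function keyed by $\Gamma$, and then invoke the Leftover Hash Lemma.

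First I would fix $\Gamma$ and define, for each subset $S\in\binom{[n]}{v}$, the map $h_\Gamma(S)=\sum_{i\in S}r_i\bmod q$. For any two distinct subsets $S\neq S'$, the value $h_\Gamma(S)-h_\Gamma(S')$ is a nontrivial $\mathbb{Z}_q$-linear combination of independent uniform $r_i$'s indexed by $S\triangle S'\neq\emptyset$; hence, over the choice of $\Gamma$, $\Pr[h_\Gamma(S)=h_\Gamma(S')]=1/q$, so the family $\{h_\Gamma\}_{\Gamma}$ is $2$-universal into $\mathbb{Z}_q$. The min-entropy of a uniformly random $S$ is $\log\binom{n}{v}$, so applying the Leftover Hash Lemma gives a statistical distance bound of the form
\[
\Delta\bigl(r,\ U_{\mathbb{Z}_q}\bigr)\ \leq\ \tfrac{1}{2}\sqrt{q\big/\tbinom{n}{v}}.
\]
Choosing $(v,n)$ so that $\binom{n}{v}\geq q\cdot 2^{2\kappa}$ drives this bound below $2^{-\kappa}$, which is the required notion of statistical closeness.

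The step I expect to be the main obstacle is the careful accounting of the distribution. There are two subtleties worth handling explicitly: (i) the $r_i$ are drawn from $\mathbb{Z}_q^*$ rather than $\mathbb{Z}_q$, but the exclusion of $0$ alters each coordinate's distribution by only $1/q$ in total variation, which can be absorbed into the final bound by a union argument over the $v$ coordinates of $S\triangle S'$; and (ii) the Leftover Hash Lemma naturally bounds the expected statistical distance over $\Gamma$, but one can either average and obtain a bound on $\Delta(r,U_{\mathbb{Z}_q})$ directly, or first invoke a Markov argument to show that all but a negligible fraction of $\Gamma$'s yield a close-to-uniform $r$. Either route recovers the statement; the concrete parameter choice then follows the original BPV instantiation, and invoking their analysis closes the argument.
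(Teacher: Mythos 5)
The paper never proves this lemma: it is imported verbatim from the original BPV paper, so there is no in-paper argument to compare against. Your Leftover-Hash-Lemma derivation is a correct, self-contained proof, and it is essentially the same argument as the cited source uses --- Boyko, Peinado and Venkatesan bound the collision probability of the subset-sum distribution and convert it to a statistical-distance bound via Cauchy--Schwarz, which is exactly the engine inside the LHL; your packaging via $2$-universality of the keyed map $S \mapsto \sum_{i\in S} r_i \bmod q$ is a clean restatement of that computation. The two subtleties you flag are handled correctly: the $\mathbb{Z}_q^*$-versus-$\mathbb{Z}_q$ discrepancy costs only $O(n/q)$ in total variation, and the LHL's average over $\Gamma$ upper-bounds the marginal distance $\Delta(r, U_{\mathbb{Z}_q})$ by data processing (the Markov route is the one you actually want if the table is fixed once at key generation, as it is here). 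One caveat worth stating explicitly: your bound $\tfrac{1}{2}\sqrt{q/\binom{n}{v}}$ is vacuous unless $\binom{n}{v} \gg q$, and the scheme's concrete instantiation in Section V only enforces $\binom{n}{v} \ge 2^{\kappa} \approx 2^{128}$ while $q$ on FourQ is roughly $2^{246}$; in that regime the subset sum has support of size at most $\binom{n}{v} \ll q$ and is information-theoretically far from uniform, so the lemma's conclusion holds only under the stronger parameter condition you identify, not for the parameters the paper actually deploys. That is a discrepancy in the paper's parameter selection (where security must instead rest on a computational subset-sum-type assumption), not a gap in your proof of the lemma as stated.
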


\subsection{System  and Security Model} \label{subsec:SecurityModel}

As depicted in Figure~\ref{fig:system}, our system model includes a highly resource-limited signer that computes signatures to be verified by any receiver. Our system model also includes  $l$ distinct parties ($P_1, \ldots, P_l$) that are involved in signature verification. In the line of~\cite{GoldbergPIR}, after the initialization phase, we consider a synchronous network which consists of a client (verifier in \uls) and semi-honest servers $ P= (P_1, \ldots, P_l) $. We assume that the communication channels are secure.

\begin{figure}[t!]
	\vspace{0mm}
	\centering

	\includegraphics[width=\linewidth,height=2.5cm]{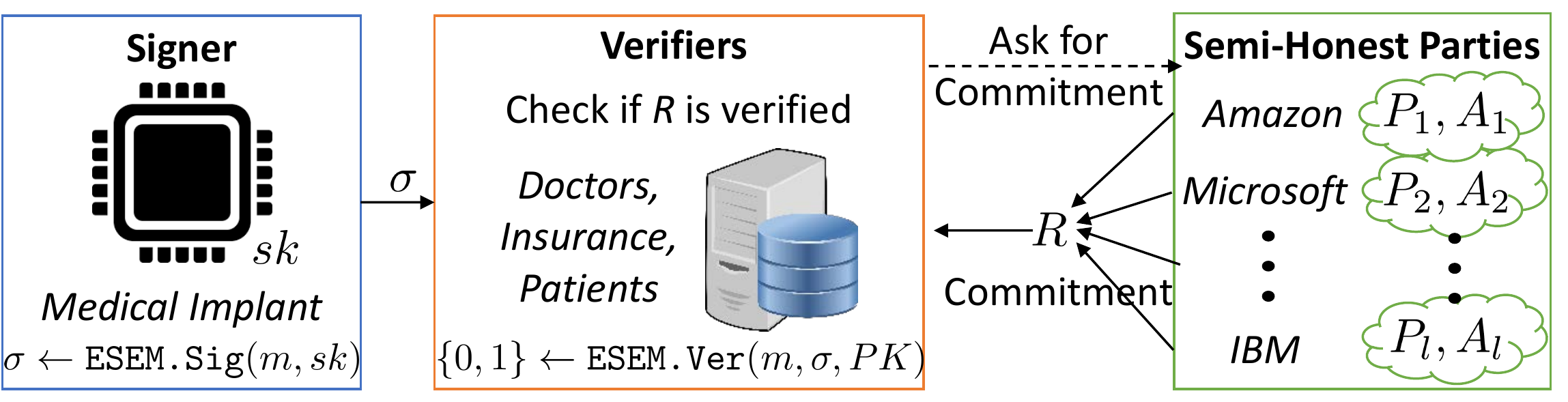}

	\caption{System Model of \uls} \label{fig:system}
\end{figure}

The security notion for a digital signature is Existential Unforgeability against Chosen Message Attack (\EUCMA)~\cite{JonathanKatzModernCrytoBook}. 

\begin{definition}\label{def:EUCMA}
	EU-CMA experiment $ \mathit{Expt}^{\textit{EU-CMA}}_{\mathtt{SGN}} $   for  a signature scheme $  \texttt{SGN} = (\texttt{Kg,Sig,Ver}) $   is defined as follows.
	\begin{itemize}\normalfont
		\item  $ (\sk,pk)\leftarrow \texttt{SGN.Kg}(1^\kappa) $
		\item $ (M^*, \sigma^*)\leftarrow \mathcal{A}^{\texttt{SGN.Sig}(\cdot)} (pk) $
		
	\end{itemize}
	
	\A wins the above experiment if  $  1 \leftarrow {\texttt{SGN.Ver}(M^*, \sigma^*, pk)} $ and $ m^* $ was not queried to  $ \texttt{SGN.Sig}(\cdot) $ oracle.  The EMU-CMA advantage $ \mathit{Adv}^{\textit{EU-CMA}}_{\texttt{SGN}} $  of  \A is defined as   $\Pr[ \mathit{Expt}^{\textit{EU-CMA}}_{\texttt{SGN}}= 1]$
\end{definition}

\begin{definition} \label{Def:l-private}
	A protocol is $ t $-private \cite{GoldbergPIR}  if any set of parties $\mathcal{S}$ with $ |\mathcal{S}| \leq t $ are not able to compute or achieve any output or knowledge any different than what they could have done individually from their set of private input and outputs. 
	
\end{definition}

 \begin{assumption}\label{assump:colluding}
 	We assume that the servers are \emph{semi-honest} - always follow the protocol, but try to learn as much as possible from the shared or observed information.
 
 \end{assumption}


For $ t=l-1 $, where $ l  $ is the total number of the servers, our proposed scheme is $ t $-private. The signature generation in our scheme does not require the participation of the servers. In other words, the signer does not need to interact with any of the $  l$ servers during the signature generation. The participation of all $ l $ servers is however required on the verifier's side. 

\section{Proposed Schemes}\label{sec:duls_proposed}%
We first discuss the design challenges to achieve our objectives outlined in Section \ref{sec:Introduction}. We elaborate our Signer NOn-interactive Distributed BPV (\node) algorithm that addresses some of them. We then present \uls~that uses \node~and other strategies to achieve our objectives. 

\subsection{High-Level Design} \label{subsec:DesignChallange}
Schnorr-like signatures with implementations on recent ECs (e.g., FourQ~\cite{FourQ}) are currently among   the most efficient and compact digital signatures. Hence, we take them as our starting point. In these schemes, the signer generates a random value $r$ and its commitment $R = \alpha^r \bmod p$, which is incorporated into both signing and verification (as an input to hash along with a message). This exponentiation (EC scalar multiplication) constitutes the main cost of the signature generation, and therefore we aim to {\em completely} eliminate it from the signing. However, this is a highly challenging task. 

\subsubsection{Commitment Generation without Signer Interaction} The elimination of commitment $R$ from the signing permits removal of EC operations such as scalar multiplication/additions. It also eliminates the transmission of $R$ and a storage of \BPV~table at the signer. However, the commitment is necessary for the signature verification. Hence, the verifier should obtain a correct commitment for each signature with the following requirements: (i) The verifier cannot interact with the signer to obtain the commitment (i.e., the signer does not have it). (ii) The signer non-interactive construction of the commitment should not reveal the ephemeral randomness $r$. (iii) Unlike some multiple-time signatures~\cite{Yavuz:2013:EET:2462096.2462108}, the verifier should not have a fixed limit on the number of signature verifications and/or a linear-size public key. 

We propose a new algorithm that we refer to as \node,~to achieve these requirements. Our idea is to create a distributed BPV technique that permits a set of parties to construct a commitment on behalf of the signer. This distributed scheme permits the verifiers to obtain the corresponding commitment of a signature from these parties on demand without revealing $r$ or an interaction with the signer. We elaborate on \node~in Section \ref{subsec:NODE}.

\subsubsection{Separation of the Commitment from Signature Generation with \node} The commitment value is generally used as a part of message hashing (e.g., $H(M||R)$ in \sch) in Schnorr-like signatures. To eliminate $R$ from the signature, the commitment must be separated from the message. However, the use of commitment in the message hashing plays a role in the security analysis of Schnorr-like signatures. Moreover, the removal of commitment $R$ from the signing while using $r$ with \node~algorithm requires a design adjustment.

We propose our main scheme \uls~that achieves these goals. In the line of~\cite{Yavuz:2013:EET:2462096.2462108}, we use a one-time random value $x$ in the message hashing, but also devise an index encoding and aggregate BPV approach to integrate \node~into signature generation. This permits a constant-size public key at the verifier without any interaction with the signer. We give the details of \uls~in Algorithm \ref{alg:ULS_Generic}. 

\subsection{Signer NOn-interactive Distributed BPV (\node)} \label{subsec:NODE}
We conceive \node~as a distributed realization of \BPV~\cite{BPV:basepaper:1998}~where $l$ parties hold $n$ public values $\{R_{i,j}\}_{i=1,j=1}^{n,l}$ of $l$ \BPV~tables, and then can collaboratively derive $R$ without learning its corresponding private key $r$ unless all of them collude. We stress that one cannot simply shift the storage of public values in a \BPV~table to a single verifier. This is because the indexes needed to compute the commitment $R$ should remain hidden in order to protect the one-time randomness $r$~\cite{BPV:DistributionofmodularSum:2001:Nguyen}. We overcome this challenge by creating a distributed \BPV~approach that can be integrated into a Schnorr-like signature. At $\nodekg$, in Step 2, the secret key $y$ is used as a seed to derive $l$ secret values $\{z_j\}_{j=1}^{l}$. Each $z_j$ is used to deterministically generate secret \BPV~values $\{r_{i,j}\}_{i=1,j=1}^{n,l}$, whose corresponding public values $\{R_{i,j}\}_{i-1,j=1}^{n,l}$ are computed in Step 4-5 and given to parties $(P_1,\ldots,P_l)$.

At the online phase, the sender (i.e., signer) generates the aggregated $r$ on its own and the receiver (i.e., verifier) generates the aggregated $R$ cooperatively with the parties $(P_1,\ldots,P_l)$. The sender first derives a random value $x$ from a keyed hash function (at $\nodesig$ Step 1), and then deterministically derives $z_j$ values (Step 3) as in $\nodekg$. Sender uses the $z_j$, that is only shared with the corresponding party, and the one-time random value $x$ to generate the set (indexes) to be used to aggregate the values. This step is of high importance since this way, the sender {\em commits} to the one-time random value $x$. Sender repeats this process for all $l$ parties and aggregates (adds) all the corresponding $r_{i,j}$s to derive the resulting $r$ (Step 5). 

The verifier proceeds as follows to generate the corresponding $R = \alpha^r \bmod p$.  At Step 1 in $\nodever$, the verifier communicates with $l$ parties to derive each $R_j$ from them. Upon request, parties first derive the same set (indexes) as the sender (Step 1 in $\texttt{SNOD.P}_j\mhyphen\texttt{Construct}$). Then, each party aggregates the corresponding $R_{i,j}$ that were assigned to them in $\nodekg$, and returns the results to the verifier. The verifier aggregates all these values at Step 2, to derive the corresponding $R$. Please note that only the parties can create the set (indexes) since only they have their corresponding $z_j$ values. Moreover, since all servers provide $R_j$ that can be generated only by them, unless all of the servers collude, they cannot learn any information about the other indexes or the one-time randomness $r$. This makes our scheme $t$-private, as shown in Lemma~\ref{lem:l-private}.

\begin{algorithm}[t!]
	\caption{Signer NOn-interactive Distributed BPV (\node)}\label{alg:NODE}
	\hspace{5pt}
	\begin{algorithmic}[1]
		
		\Statex   $\underline{(A_1, \ldots, A_l)\as \nodekg(1^{\kappa},y,v,n)}$: Given $1^{\kappa}$, secret key $y$, and parameters $(v,n)$ generate precomputation tables.
		\For{$j = 1, \ldots, l$}
		\State $z_j \as \texttt{PRF}_0(y||j)$
		\For{$i = 1, \ldots, n$}
		\State $r_{i,j} \as \texttt{PRF}_0(z_j||i)$
		\State $R_{i,j} \as \alpha^{r_{i,j}} \bmod p$
		\EndFor 
		\State Set $A_j = (z_j, v,\langle R_{1,j}, \ldots ,R_{n,j} \rangle )$
		\EndFor
		\State \Return each $A_j$ to corresponding party $P_j$

	\end{algorithmic}
	\algrule
	
	\begin{algorithmic}[1]
		\Statex $\underline{(r,x) \as \nodesig(sk)}$: 
		\State $x \as H_0(sk||c)$, $c \as c+1$
		\For{$j = 1, \ldots, l$}
		\State $z_j \as \texttt{PRF}_0(sk||j)$
		\State $(i_{1,j}, \ldots, i_{v,j}) \as H_1(z_j||x)$
		\EndFor 
		\State $r \as \sum_{k = 1}^{v}\sum_{j=1}^{l}\texttt{PRF}_0(z_j||i_{k,j}) \bmod q$
		\State \Return $r$
	\end{algorithmic}
	\algrule
	
	\begin{algorithmic}[1]
		
		\Statex   $\underline{\bar{R}_j \as \texttt{SNOD.P}_j\mhyphen\texttt{Construct}(A_j,x)}$: Given $x$ and $A_j$, each party $P_j$ returns $\bar{R}_j$.
		\vspace{3pt}
		\State $(i_{1,j}, \ldots, i_{v,j}) \as H_1(z_j||x)$
		\State $\bar{R}_j \as \prod_{k=1}^{v}R_{i_{k,j},j} \bmod p$
	\end{algorithmic}
	
	\algrule
	
	\begin{algorithmic}[1]
		\Statex $\underline{R \as \nodever(x)}$ Given $x$, retrieve its commitment $R$  under \pk~from $(P_1,\ldots,P_l)$.
		\State Verifier sends $x$ to the parties, and each party $P_j$ returns $\bar{R}_j \as \texttt{SNOD.P}_j\mhyphen\texttt{Construct}(A_j,x)$ for $j = 1, \ldots, l$.
		\State $R \as \prod_{j=1}^{l}\bar{R}_j \bmod p$
		\State \Return $R$
	\end{algorithmic}
\end{algorithm}

\subsection{Energy-aware Signature for Embedded Medical devices} \label{subsec:ULS}
We summarize our main scheme \uls~(see Algorithm ~\ref{alg:ULS_Generic}), which permits a near-optimal signing by integrating \node~into \texttt{Schnorr} signature with alternations.

During key generation, secret/public key pair ($y,Y$) and \BPV~parameters are generated (Step 1-2), followed by $\nodekg$\any~algorithm to  obtain the distributed \BPV~public values to be stored by parties $(P_1,\ldots,P_l)$. In $\ulssig$\any, the signer generates the ephemeral random value $r$ and one-time randomness $x$ to be used as the commitment. Instead of the commitment in \texttt{Schnorr} ($R$), the signer uses $x$ as the commitment in Step 2. This separation of the commitment $R$ from the message hashing is inspired from~\cite{Yavuz:2013:EET:2462096.2462108}. Note that, unlike the multiple-time signature in~\cite{Yavuz:2013:EET:2462096.2462108} that can only compute a constant pre-determined number of signatures with a very large linear-size public key, \uls~can compute polynomially unbounded number of signatures with a constant public key size.  Finally, the verifier first calls the $\nodever$\any~algorithm to generate the public value $R$, by collaborating with the parties. The signature verification, which is similar to \texttt{Schnorr} with the exception of the commitment $x$, is performed at Step 2.

\begin{algorithm}[h!]
	\caption{Energy-aware Signature for Embedded Medical devices (\uls)}\label{alg:ULS_Generic}
	\hspace{5pt}
	\begin{algorithmic}[1]
		
		\Statex   $\underline{(\sk,\pk, A_1, \ldots, A_l)\as \ulskg(1^{\kappa})}$: 
		\vspace{3pt}
		\State $(y,Y) \leftarrow \schkg(1^{\kappa})$, where $(q,p,\alpha)$ as in \schkg.
		\State Select $(v,n)$ such that $\binom{n}{v} \geq 2^{\kappa}$
		\State $(A_1, \ldots, A_l) \as \nodekg({1^{\kappa}, y,v,n})$
		\State The signer stores $ \sk=(y)$ and  $(v,n,q, c = 0)$. The verifier stores $\pk=Y$ and $(v,n,q,p,\alpha)$.
	\end{algorithmic}
	\algrule
	
	\begin{algorithmic}[1]
		\Statex $\underline{\sigma\as \ulssig(m,\sk)}$: 
		\vspace{3pt}
		\State $(r,x) \as \nodesig(y)$
		\State $s \as r - H_2(m||x) \cdot y \bmod q$
		\State \Return $\sigma=(s,x)$.
	\end{algorithmic}
	\algrule
	
	\begin{algorithmic}[1]
		\Statex $\underline{\zo\as \ulsver(m,\sigma,\pk)}$: 
		\vspace{3pt}
		\State $R \as \nodever(x)$
		\State \textbf{if} {$R=Y^{H_2(m||x)} \cdot \alpha^{s} \bmod p$}, \Return 1, else \Return 0.
	\end{algorithmic}
\end{algorithm}

\subsubsection{\ulsv} \label{subsec:ULSv}
We point out a trade-off between the private key size and signing speed, which can increase the signature generation performance with the cost of some storage. The signer can store private keys $\{r_{i,j}\}_{i=1,j=1}^{n,l}$ in the memory, and therefore avoid $v \cdot l$ \texttt{PRF} invocations. {\em We refer to this simple variant as \ulsv}. As demonstrated in Section \ref{sec:PerformanceAnalysis}, an extra storage of $12$ KB can boost the performance of \uls~commodity hardware.

\section{Security Analysis} \label{sec:security}

\begin{lemma}\label{lem:l-private}
	The scheme proposed in Algorithm \ref{alg:NODE} is $t$-private (in the sense of Definition \ref{Def:l-private}) with regard to $ r $  and therefore, it can resist against $ l -1 $ colluding servers.  
	
\end{lemma}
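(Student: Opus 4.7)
The plan is to show that from the joint view of any coalition of $l-1$ semi-honest servers, the ephemeral randomness $r$ produced by \nodesig is statistically close to a uniformly random element of $\mathbb{Z}_{q}^{*}$, so the coalition learns nothing about $r$ beyond what each member could derive individually from its own private input and outputs. Without loss of generality, assume the coalition controls $\mathcal{S}=\{P_1,\ldots,P_{l-1}\}$ while $P_l$ is the honest party. Its joint view consists of the tables $\{A_j=(z_j,v,\langle R_{i,j}\rangle_{i=1}^{n})\}_{j\in\mathcal{S}}$, the value $x$ forwarded during verification, and the honest party's response $\bar{R}_l=\alpha^{r_l}\bmod p$.

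The first step is to decompose $r=r_{\mathcal{S}}+r_l \bmod q$, where $r_{\mathcal{S}}=\sum_{j\in\mathcal{S}}\sum_{k=1}^{v}r_{i_{k,j},j}$ is fully reconstructible from the coalition's tables and $x$, while $r_l=\sum_{k=1}^{v}\texttt{PRF}_0(z_l \| i_{k,l}) \bmod q$ depends on the unknown seed $z_l=\texttt{PRF}_0(y \| l)$. Thus the secrecy of $r$ reduces to the secrecy of $r_l$. Through a standard PRF hybrid, I would replace $\texttt{PRF}_0(y \| \cdot)$ by a uniformly random function, making $z_l$ indistinguishable from uniform (since $y$ appears nowhere else in the coalition's view, having been erased from $P_l$'s transcript). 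A second PRF hybrid on $\texttt{PRF}_0(z_l \| \cdot)$ then turns $\{r_{i,l}\}_{i=1}^{n}$ into independent uniform samples in $\mathbb{Z}_{q}^{*}$. Modeling $H_1$ as a random oracle, the index tuple $(i_{1,l},\ldots,i_{v,l})=H_1(z_l \| x)$ is then a uniformly random $v$-subset of $[1,n]$ independent of the coalition's transcript, because the query point $z_l \| x$ is touched only by $P_l$. Invoking Lemma \ref{lem:BPVOutputRandom} on this freshly sampled subset shows that $r_l$ is statistically close to uniform over $\mathbb{Z}_{q}^{*}$ for the chosen $(v,n)$, and therefore so is $r=r_{\mathcal{S}}+r_l$.

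The only residual side information about $r_l$ is the group element $\bar{R}_l=\alpha^{r_l}\bmod p$ returned to the verifier, from which extracting $r_l$ is a discrete-log instance; hence $r_l$, and consequently $r$, remains computationally hidden. The main technical hurdle will be the hybrid argument justifying the independence of $H_1(z_l \| x)$ from the coalition's transcript: it requires that $z_l$ never leak outside $P_l$'s local computations, which must be verified by inspecting every message of the protocol to confirm $z_l$ is neither transmitted nor implicitly determined by values the coalition observes. A secondary care point is that the statistical distance from uniform provided by Lemma \ref{lem:BPVOutputRandom} is an explicit function of $(v,n)$, so the quantitative $t$-privacy bound inherits exactly the same parameter constraint already imposed for the underlying \BPV~generator.
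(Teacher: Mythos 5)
Your proof is correct and follows essentially the same route as the paper's (much terser) argument: the secrecy of $r$ is reduced to the honest server's summand, which is pseudorandom because its seed $z_l$ never enters the coalition's view, with Lemma~\ref{lem:BPVOutputRandom} closing the gap to uniform. Your version is considerably more careful than the paper's three-sentence sketch---the explicit decomposition $r = r_{\mathcal{S}} + r_l$, the two PRF hybrids, the random-oracle treatment of $H_1$, and the observation that $\bar{R}_l = \alpha^{r_l} \bmod p$ leaks only a discrete-log instance all go beyond what the paper writes down---though your assertion that ``$y$ appears nowhere else in the coalition's view'' overlooks the public key $Y = \alpha^{y} \bmod p$, a key-reuse subtlety that the paper's own proof also silently elides.
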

\begin{proof}
	The random values   $ \{r_{i,j} \gets  \texttt{PRF}_0(z_j||i) \}_{j=1,i=1}^{l,n}$  are generated uniformly at random  in the \nodekg$ (\cdot) $ via private seed $  z_j $,  which is given to each server  $ P_j $.   The  security of $ \node $ (i.e., \uls) relies on the secrecy of $ r  \gets  \sum_{i = 1}^{l}\sum_{j=1}^{n}r_{i,j} \bmod q$. Given each $ r_{i,j}  $ is generated uniformly at random via $ z_j $'s, and due to Lemma \ref{lem:BPVOutputRandom}, for the adversary  \A   to infer $ r $, it must know all $ l $ private seeds  $ z_j $ or corrupt all of the  $ l $ servers. 
\end{proof}

\begin{theorem} \label{the:ULSSecurityTheorem}
	In the random oracle model, based on Assumption \ref{assump:colluding}  and Lemma \ref{lem:l-private}, if a polynomial-time adversary \A can break the EU-CMA  security of   \uls~in time $ t $ and after $ q_h $ hash and $ q_s  $ signature queries, then one can build polynomial-time algorithm \F that breaks the EU-CMA  security of  \sch~signature in time $ t' $ and $ q_{s }' $ signature queries. 
	\begin{eqnarray*}
		\advuls & \le & \advsch ,
	\end{eqnarray*}
\end{theorem}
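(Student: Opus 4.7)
The plan is a reduction in the random oracle model: assume adversary \A~breaks EU-CMA of \uls~with advantage $\epsilon$ in time $t$ after $q_h$ hash and $q_s$ signing queries, and construct a forger \F~that runs \A~to win the EU-CMA game for \sch~with comparable advantage and parameters. \F~receives the \sch~public key $Y$ and oracle access to \schsig; it sets $pk := Y$ for \A, simulates $H_0, H_1, H_2$ by lazy uniform sampling, and picks each seed $z_j$ uniformly (appealing to \texttt{PRF} security and the random-oracle treatment of $H_0$). It populates each party's \node~table with $n - q_s$ ``base'' entries $R_{i,j} = \alpha^{r_{i,j}}$ for known random $r_{i,j}$, and reserves $q_s$ ``adjustment'' entries to be defined on demand. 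Since \A~may corrupt up to $l-1$ parties (Assumption~\ref{assump:colluding}), \F~places all adjustments in an uncorrupted party's table; by Lemma~\ref{lem:l-private} and Lemma~\ref{lem:BPVOutputRandom} the simulated tables are indistinguishable from honest ones.

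To answer the $i$-th \ulssig~query on $m_i$, \F~samples a fresh $x_i \stackrel{\$}{\leftarrow} \{0,1\}^\kappa$, invokes the \sch~signing oracle on $M_i := m_i \| x_i$, and obtains $(s'_i, e'_i)$ with $R'_i := \alpha^{s'_i} Y^{e'_i} \bmod p$ and $e'_i = H(M_i \| R'_i)$. It programs $H_2(m_i \| x_i) := e'_i$ and programs each $H_1(z_j \| x_i)$ to a random $v$-tuple comprising $v{-}1$ base slots plus a fresh adjustment slot $\mathsf{adj}_i$; then it sets the adjustment entry in the uncorrupted party's table to $R'_i \cdot \bigl(\prod_{\text{other selected } R}\bigr)^{-1}$, so that $\nodever(x_i) = R'_i$ by construction. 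The returned signature $(s'_i, x_i)$ verifies because $\alpha^{s'_i} Y^{H_2(m_i \| x_i)} = R'_i = \nodever(x_i)$. For any independent query $H_2(m \| x)$ on a pair not arising from signing, \F~computes $R := \nodever(x)$ (assigning fresh random indices via $H_1$ if needed) and sets $H_2(m \| x) := H(m \| x \| R)$ using a fresh query to \sch's hash oracle.

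When \A~outputs a valid forgery $(m^*, \sigma^* = (s^*, x^*))$ with $m^* \neq m_i$ for every $i$, validity gives $\alpha^{s^*} Y^{H_2(m^* \| x^*)} = \nodever(x^*) =: R^*$; combined with \F's programming $H_2(m^* \| x^*) = H(m^* \| x^* \| R^*)$, the pair $(M^* := m^* \| x^*,\ (s^*, H(M^* \| R^*)))$ is a valid \sch~forgery, and it is fresh because $m^* \neq m_i$ implies $M^* \neq M_i$ for every signing query. Simulation losses are negligible: $q_s q_h 2^{-\kappa}$ for a bad $H_2$-precollision at a freshly sampled $x_i$, $q_s^2 2^{-\kappa}$ for $x_i$-collisions, and $q_h 2^{-\kappa}$ per uncorrupted party for an adversary guessing its secret $z_j$. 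This yields $\advuls \le \advsch + \mathsf{negl}(\kappa)$ with $t' \approx t$ and $q_s' \le q_s$, matching the theorem.

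The main obstacle is the signing simulation: \F~cannot choose the ephemeral $R'_i$ returned by the \sch~oracle, yet must retrofit $\nodever(x_i) = R'_i$ without knowing $\log_\alpha R'_i$. The resolution is the single adjustment slot per query: \F~only needs the group element $R_{\mathsf{adj}_i, j^\star} = R'_i \cdot \gamma$ for a known $\gamma$ and never needs its discrete log, and because $R'_i$ is uniform over $G$, the adjustment entry is itself uniformly distributed, matching the honest table's distribution so that even a corrupt party sees nothing anomalous (and the honest party's table is invisible to \A~by $l$-privacy). A secondary subtlety is keeping $H_2$ consistent across signing and direct hash queries, which is handled by always defining $H_2(m \| x) := H(m \| x \| \nodever(x))$ through the \sch~hash oracle, so both access paths automatically agree whenever $\nodever(x_i) = R'_i$.
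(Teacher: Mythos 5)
Your reduction is sound, but it takes a genuinely different route from the paper's. The paper's \F~front-loads all of its \sch~oracle use: at setup it requests $n$ Schnorr signatures $(s'_j,h'_j)$ on dummy messages, uses the recovered commitments $R'_j = Y^{h'_j}\alpha^{s'_j}$ as the simulated \BPV~table, and answers each \uls~signing query by \emph{aggregating} $v$ of those real signatures ($\os=\sum s'_{k_i}$, $\oh=\sum h'_{k_i}$) while programming $H_2(M||x):=\oh$; no table entry is ever retrofitted and no per-query signing call is made. Your \F~instead makes one \sch~query per signing query on the composite message $m_i||x_i$ and forces $\nodever(x_i)=R'_i$ through a reserved adjustment slot in an uncorrupted party's table --- a heavier simulation of the distributed tables, but one that buys a tight forgery conversion: because you wire $H_2(m||x):=H(m||x||\nodever(x))$, a \uls~forgery \emph{is} a fresh \sch~forgery on $m^*||x^*$, with only additive negligible loss. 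The paper instead invokes the forking lemma, rewinding \A~to extract $y$ from two related forgeries, which costs a quadratic degradation $\epsilon_\mathcal{F}\gtrsim \epsilon_\mathcal{A}^2/(q_h+q_s)$ in its own probability analysis (arguably a worse fit to the stated bound $\advuls\le\advsch$ than your accounting). Two points worth tightening in your write-up: (i) the freshness claim $M^*\neq M_i$ needs the fixed length of $x$ so that the parse of $m||x$ is unambiguous, and (ii) your programming of $H_1(z_{j^\star}||x_i)$ to contain exactly one reserved slot deviates from the uniform $v$-subset distribution of the real scheme, so you should say explicitly that this is undetectable only because $z_{j^\star}$ and party $P_{j^\star}$'s table are hidden from \A~--- which is precisely where Assumption \ref{assump:colluding} and Lemma \ref{lem:l-private} enter your argument.
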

\noindent {\em Proof:} Please refer to the Appendix.

\section{Performance Analysis} \label{sec:PerformanceAnalysis}

\subsection{Parameter Selection}

We select FourQ curve~\cite{FourQ} that offers fast elliptic curve operations (that is desirable for our verification process, remark that signer has no EC operations) with $128$-bit security level. The selection of parameters $(v,n)$ relies on the number of $v$-out-of-$n$ different combinations possible. We select $n= 1024$, $v = 18$ for \uls~and $n= 128$, and $v = 40$ for \ulsv, where both offers over $2^{128}$ different combinations. Lastly, we select $l =3$ (i.e., 3 parties are involved in verification).

\begin{figure*}[t!]
	\vspace{0mm}
	\centering
	\begin{subfigure}{.45\textwidth}
		
		\includegraphics[width=\linewidth]{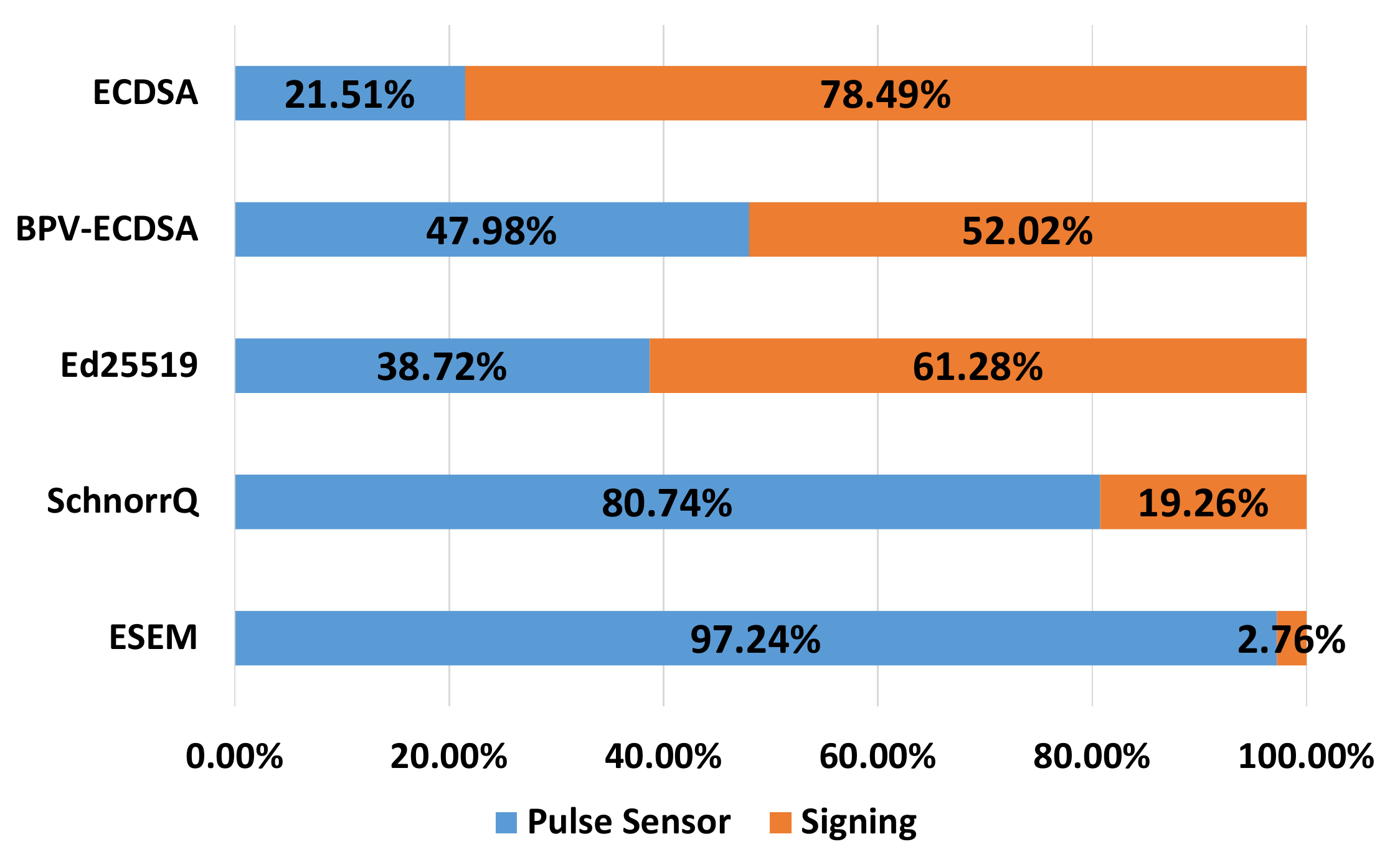}
		\caption{Energy of Signature Generation vs Pulse Sensor}
		\label{fig:areaPulse}
		\vspace{-1mm}
	\end{subfigure}%
	\begin{subfigure}{.45\textwidth}
		
		\includegraphics[width=\linewidth]{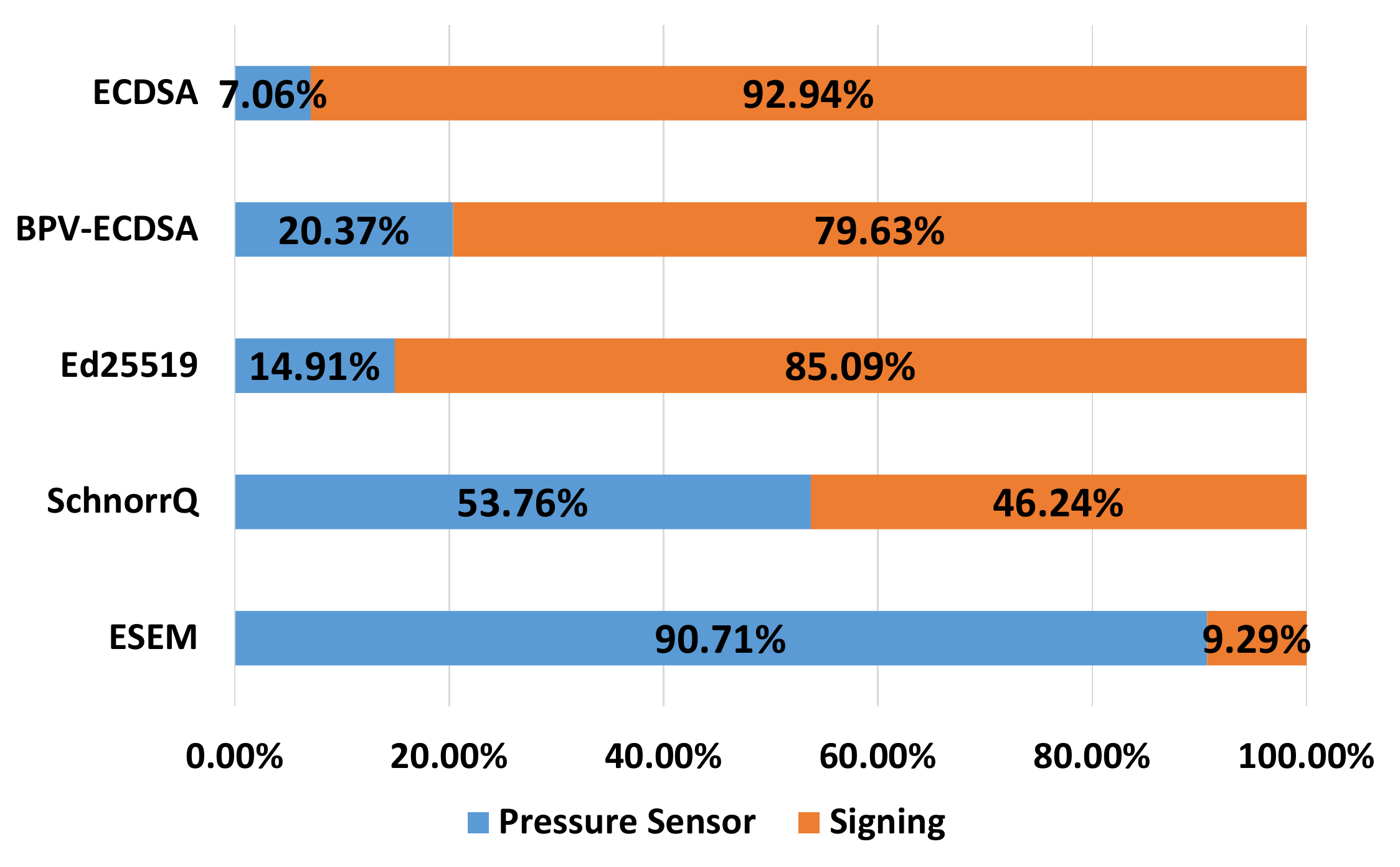}
		\caption{Energy of Signature Generation vs Pressure Sensor}
		\label{fig:areaPressure}
		\vspace{-2mm}
	\end{subfigure}
	\caption{Energy consumption of signature generation vs IoT sensors} \label{fig:areaBattery}
	\vspace{-6mm}
\end{figure*}

\subsection{Evaluation Metrics and Experimental Setup} \label{subsec:ExpSetupMetrics}

\noindent \textbf{Evaluation Metrics}: We implemented \uls~and its counterparts both on the low-end device (8-bit microcontroller) and a commodity hardware. (i) At the signer's side, the signature generation time and private key size were evaluated on both types of devices. The energy consumption and code size were evaluated on a low-end device. (ii) The signature size is evaluated as the communication overhead. (iii) At the verifier's side, the signature verification time and the size of public key were evaluated on the commodity hardware. 

Note that the time required to transmit the \uls~signature (only $48$ Bytes) is  already smaller than all of its counterparts. Therefore, we do not include this in our experiments. The  bandwidth overhead to construct $R$ between the verifier and $l$ parties is only $48$ Bytes, and highly depends on the geographic location of the server (i.e., round trip time). We conservatively benchmark this network delay and include in our signature verification time, with an Amazon EC2 server in North Virginia.

\noindent \textbf{Hardware Configurations and Software Libraries}: We selected AVR ATmega 2560 microcontroller as our low-end device due to its low power consumption and extensive use in practice, especially for medical devices \cite{MedicalDevice:SoK:SP:2014:Rushanan,Ozmen_IOT_SP,ATmega2560Medical}. It is an 8-bit microcontroller with $256$ KB flash memory, $8$ KB SRAM, $4$ KB EEPROM and maximum clock speed is $16$ MHz.

We implemented our schemes using Rhys Weatherley's library\footnote{https://github.com/rweather/arduinolibs/tree/master/libraries/Crypto}, which enables Barrett reduction to compute modulo $q$. We used BLAKE2s~\cite{blakeHash} as our hash function from the same library, since it is optimized for low-end devices in terms of speed and code size. We instantiated our \texttt{PRF} function as CHACHA20 stream cipher~\cite{BernsteinCHACHA} which offers high efficiency. To assess our counterparts, we used ECDSA implementation in microECC\footnote{https://github.com/kmackay/micro-ecc}, with which we also implemented BPV-ECDSA. We used the implementations on same microcontroller to assess Ed25519~\cite{Ed25519AVR} and SchnorrQ~\cite{FourQ8bit}.

We powered the microcontroller with a $2200$ mAh power pack. ATmega 2560 operates at a voltage level of $5$ V and takes $20$ mA current\footnote{http://www.atmel.com/Images/Atmel-2549-8-bit-AVR-Microcontroller-ATmega640-1280-1281-2560-2561\_datasheet.pdf}. We verified the current readings taken from datasheets by connecting an ammeter between the battery and ATmega 2560, and we observed an insignificant difference. Therefore, we measured the energy consumption with the formula $E = V \cdot I \cdot t$ where $t$ is the computation time. To account the variations in time $t$, we run each scheme $10^4$ times and took the average.

We also investigated the effect of cryptography on the battery life in some real-life IoT applications. For this purpose, we measured the energy consumption of a pulse sensor\footnote{https://pulsesensor.com/} and a BMP183 pressure sensor\footnote{https://cdn-shop.adafruit.com/datasheets/1900\_BMP183.pdf}. We expect that the pulse and pressure sensors provide some ideas on the use of digital signatures with sensors in medical devices and daily IoT applications, respectively.

$\bullet$~{\em Commodity Hardware}: We used an Intel i7-6700HQ $2.6$ GHz processor with $12$ GB of RAM as the commodity hardware in our experiments. We implemented the arithmetic and curve operations of our scheme with FourQlib\footnote{https://github.com/Microsoft/FourQlib}. We used BLAKE2b~\cite{blakeHash} as our hash function since it is optimized for commodity hardware. Lastly, we instantiated our \texttt{PRF} with AES in counter mode using Intel intrinsics. For our counterparts, we used their base implementations.

As the semi-honest party, we used an Amazon EC2 instance located in North Virginia. Our EC2 instance was equipped with an Intel Xeon E5 processor that operates at $2.4$ GHz. 

Our implementations are open-sourced at:
\vspace{-1mm}
\begin{center}
	\fbox{\url{www.github.com/ozgurozmen/ESEM}}
\end{center}
\vspace{-2mm}

\subsection{Performance Evaluation and Comparisons} \label{subsec:PerformanceEval}

\noindent \textbf{Low-end Device}: Table~\ref{tab:AVR} shows the results obtained from our implementations on 8-bit AVR ATmega 2560.

$\bullet$~{\em Signature Generation Speed}: \uls~has the fastest signing speed, which is $8.4$$\times$ and $55$$\times$ faster than that of SchnorrQ and Ed25519, respectively.

\begin{table*}[t!]
	\centering
	\caption{Experimental performance comparison of \uls~schemes and their counterparts on commodity hardware} \label{tab:Laptop}
	\vspace{-2mm}
	\begin{threeparttable}
		\begin{tabular}{| c || c | c | c | c |  c | c | c | c | }
			\hline
			\textbf{Scheme} & \specialcell[]{\textbf{Signing}\\  \textbf{ Time (}$\mu$s\textbf{)}} & \specialcell[]{\textbf{Private Key}\textsuperscript{$ \mathparagraph$} \\ \textbf{(Byte)}} & \specialcell[]{\textbf{Signature }\\  \textbf{Size (Byte)}} & \specialcell[]{\textbf{Verifier}\\  \textbf{Comp. (}$\mu$s\textbf{)}} & \specialcell[]{\textbf{Verifier} \\ \textbf{Storage (Byte)}} & \specialcell[]{\textbf{Server}\\  \textbf{Comp. (}$\mu$s\textbf{)}} & \specialcell[]{\textbf{Server} \\ \textbf{Storage (KB)}} & \specialcell[]{\textbf{End-to-End} \\ \textbf{Delay\textsuperscript{$\dagger$} (}$\mu$s\textbf{)}} \\ \hline \hline
			
			
			ECDSA & 725 & 32 & 64 & 927 & 32 & $-$ & $-$ & 1652 \\ \hline
			
			BPV-ECDSA & 149 & 10272 & 64 & 927 & 32 & $-$ & $-$ & 1076 \\ \hline
			
			Ed25519 & 132 & 32 & 64 & 336 & 32 & $-$ & $-$ & 468 \\ \hline
			
			SchnorrQ & 12 & 32 & 64 & 22  & 32 & $-$ & $-$  &  34 \\ \hline \hline
			

			\uls & \textbf {11} & 32 & \textbf {48} & 24 & 32 & 5 & 32784 & $11+24+5+\Delta$ \\ \hline
			
			\ulsv &\textbf {4} & 12416 & \textbf {48} & 24 & 32 & 10 & 4112 & $4+24+10+\Delta$ \\ \hline
		\end{tabular}
		\begin{tablenotes}[flushleft]\scriptsize{  
				$ \mathparagraph $ System wide parameters \params~(e.g., p,q,$\alpha$) for each scheme are included in their corresponding codes, and private key size denote to specific private key size.
				
				$\dagger$ $\Delta$ represents the communication between the verifier and servers. Since the verifier communicates with $l = 3$ servers, the maximum communication delay is included in our end-to-end delay. This communication is measured to be $37$ ms on average by our experiments, with an Amazon EC2 instance in N. Virginia. 
				
				
			}
		\end{tablenotes}
	\end{threeparttable}
	\vspace{-5mm}
\end{table*}

$\bullet$~{\em Energy Consumption of Signature Generation}: With a $2200$ mAh battery, \uls~can generate nearly \em $800000$ \em signatures, whereas SchnorrQ, Ed25519 and ECDSA can generate only $94482$, $14235$ and $6173$ signatures, respectively. This shows that, \uls~can generate significantly higher number of signatures with the same battery.

$\bullet$~{\em Energy Consumption of Signature Generation versus IoT Sensors}: We considered a pulse and a pressure sensor to exemplify the potential medical and home automation IoT applications, respectively. We selected the sampling time (i.e., the frequency of data being read from the sensor) as every 10 seconds and every 10 minutes for the pulse and pressure sensor, respectively, to reflect their corresponding use-cases. We measured the energy consumption by considering three aspects: (i) Each sensor by default draws a certain energy as specified in its datasheet. The pulse sensor operates at $3$ V and draws $4.5$ mA of current, while pressure sensor operates at $2.5$ V and draws 5 $\mu$A of current. These values are multiplied by their corresponding sampling rates to calculate the energy consumption of the sensor. (ii) AVR ATmega 2560 consumes energy to make readings from the sensor as well as during its waiting time. We measured the time that takes the microcontroller to have a reading from the sensor as $1$ ms. Therefore, we calculated the energy consumption of the microcontroller on active time as $5V \cdot 20mA \cdot 1ms$. (iii) ATmega 2560 requires $10$ $\mu$A in power-save mode, which is used to calculate the energy consumption in the idle time.

We compared the energy consumption of signature generation and IoT sensors in Figure~\ref{fig:areaBattery}. {\em \uls~reduces the energy consumption of signature generation to $2.76$\% and $9.29$\% compared to that of pulse and pressure sensors, respectively}. Observe that, compared with the pressure sensor, SchnorrQ as the fastest counterpart of \uls,~requires $46.24$\%, while Ed25519 demands $85.09$\% of the energy consumption. When the pulse sensor is used, while \uls~requires an almost negligible energy consumption ($2.76$\%), its closest counterpart requires $19.29$\%. The energy efficiency of \uls~also translates into longer battery life in these applications. More specifically, when pressure sensor is deployed with \uls, it takes $511$ days to drain a $2200$ mAh battery, while it is $303$ days for our closest counterpart (SchnorrQ).

Our experiments show that the existing ECC-based digital signatures consume more energy than IoT sensors, which make them the primary source of battery consumption. On the other hand, {\em \uls~was able to reduce the signature generation overhead to a potentially negligible level in some cases, at minimum offering improvements over its counterparts.}

\noindent \textbf{Commodity Hardware}: The benchmarks of \uls~and its counterparts on commodity hardware are shown in Table~\ref{tab:Laptop}. 

$\bullet$~{\em Signature Generation}: \uls~and \ulsv~schemes offer the fastest signature generation on commodity hardware as well. Especially \ulsv~(the high-speed variant where private key components are stored instead of generating them from a seed), is 3$\times$ faster than its closest counterpart.

$\bullet$~{\em Signature Verification}: The signature verification in \uls~includes verifier computation, server computation and communication between the verifier and servers. Due to the computational efficiency of FourQ curve, verifier and server computation of \uls~verification is highly efficient. Specifically, verifier computation takes $24$ $\mu$s in \uls~and \ulsv; and server computation takes $5$ $\mu$s, and $10$ $\mu$s for \uls~and \ulsv, respectively. The communication between server and verifier is experimented with our commodity hardware and an Amazon EC2 instance at N. Virginia. This delay was measured as $37$ ms on average.

The fastest verification is observed at SchnorrQ scheme, that is $22$ $\mu$s. This scheme should be preferred if the verification speed is of high importance. However recall that for our envisioned applications, the signer efficiency (energy efficiency) is of top priority and a small delay at the verifier is tolerable.

\section{Related Work}\label{sec:Related}%

There are two main lines of work to offer authentication for embedded medical devices: symmetric key primitives (e.g., MACs) and public key primitives (e.g., digital signatures). In this section, we only mention lightweight digital signature schemes that are most relevant to our work.

One-time signatures~(e.g., \cite{Yavuz:2013:EET:2462096.2462108,HORS_BetterthanBiBa02,ReiOneTime}) offer high computational efficiency, but usually have very large key and signature sizes that hinder their adoption in implantable medical devices. Moreover, they can only sign a pre-defined number of messages with a key pair, which introduce a key renewal overhead. The extensions of hash-based one-time signatures to multiple-time signatures (e.g., SPHINCS~\cite{SPHINCS:Bernstein:2015}) have high signing overhead, and therefore are not suitable for medical implantables. Some MAC based alternatives (e.g., TESLA \cite{ExtTESLA,ContiDelayedDisclosure}) use time asymmetries to offer computational efficient and compactness, they cannot offer non-repudiation and require a continuous time synchronization. EC-based digital signatures (e.g.,~\cite{FourQ,Ed25519,FourQ8bit,ECDSA,Conti_MedicalECCAuth,ARIS}) are currently the most prevalent alternatives to be used on embedded devices due to their compact size and higher signing efficiency compared to RSA-based signatures (e.g., CEDA~\cite{CEDA}).  We provided a detailed performance comparison of \uls~with its most recent EC-based alternatives in Section \ref{sec:PerformanceAnalysis}. 

\section{Conclusion}\label{sec:conclusion}

In this paper, we proposed \uls, that achieves the least energy consumption, the fastest signature generation along with the smallest signature among its ECC-based counterparts. \uls~is also immune to side-channel attacks aiming EC operations/exponentiations as well as to weak pseudo random number generators at the signer's side, since \uls~does not require any of these operations in its signature generation algorithm. We believe \uls~is highly preferable for applications wherein the signer efficiency is a paramount requirement, such as implantable medical devices. We implemented \uls~and its counterparts both on a resource-contrained device commonly used in medical devices and a commodity hardware.  Our experiments validate the significant energy efficiency and speed advantages of \uls~at the signer's side over its counterparts.\\ \vspace{-3mm}

\noindent \textbf{Acknowledgments.} This work is supported by the NSF Award \#1652389.

		\bibliographystyle{IEEEtran}
		\bibliography{crypto-etc}

\vspace{-1mm}
\appendix \label{sec:appendix}
\vspace{-1mm}

\noindent\textbf{\underline{Proof of Theorem~\ref{the:ULSSecurityTheorem}}} \\ \vspace{-3.5mm}

\noindent {\em Proof:}  If a polynomial-time  adversary \A~breaks the EU-CMA security of  \uls, then one can build another polynomially-bounded algorithm \F~that runs \A~as a subroutine and breaks \sch~signature. After setting $(y,Y) \leftarrow \schkg(1^{\kappa})$ and the corresponding parameters $(q,p,\alpha)$, \F~is run as in Definition \ref{def:EUCMA} as follows.  \F~handles  \A's and other queries in  Definition \ref{def:EUCMA}.

\vspace{1mm} \noindent \underline{{\em Algorithm $\mathcal{F}^{\schsig_{y}(\cdot)}(Y)$}}:

$\bullet$~\underline{{\em Setup:}} \F~maintains five lists \lm,~\lh,~\lhp,~\lhpp~and \lw, all initially empty. \lm~is a message list that records each message $M$ to be queried to \ulssig($\cdot$)~oracle by $ \mathcal{A} $. $\lh$,  $\lhp$ and $ \lhpp$~record the input $x$ to be queried to \ro~oracle and its corresponding \ro~answer $h$, respectively. $h \as\lhi (x)$, for $ i \in \{0,1,2\} $, returns the corresponding \ro~answer of $x$ stored in \lhi. \lw~keeps the record of messages that \F~queries to \schsig~oracle. \F~sets up \ro~and simulated public keys to initialize \ulssig~oracle as follows:

\begin{enumerate}[-]
	\setlength{\itemsep}{1pt}

	\item {\em Setup \ro~Oracle}: \F~implements a function \hsim~to handle \ro~queries. That is, the cryptographic hash function $H$ is modeled as a random oracle via \hsim~as follows. $h\as \hsim(x,\lhi)$: If $h\in \lhi$, for $ i \in \{0,1,2\} $,  then \hsim~returns the corresponding value $h\as  \lhi (M)$. Otherwise, it returns $h\Rq$ as the answer, inserts $(M,h)$ into $ \lhi$, respectively.
	
		\item {\em Setup Simulated Keys}: \F~selects parameters $(v,n)$ as in \ulskg$ (\cdot) $~Step 1, and works as follows:
	\begin{enumerate}[1)]
		\setlength{\itemsep}{1pt}
		\setlength{\parskip}{0pt}
		\setlength{\parsep}{0pt}
		\item Queries $\schsig_{y}(\cdot)$   on $w_j\Rq$, and receives signatures $(s'_j,h'_j)$, where $\{R'_j\as Y^{h'_j} \cdot \alpha^{s'_j} \bmod p\}_{j=0}^{n-1}$. 
		
		\item  Sets $\pk\as Y$ and system-wide public parameters $(v,n,q,p,\alpha)$ as in \ulskg\any.
 
		\item Sets $z\as \zo^{\kappa}$ and $\gamma=\{s'_j,h'_j\}_{j=0}^{n-1}$ and  $\beta=\{R'_{j}\}_{j=0}^{n-1}$ and inserts $\{w_j\}_{j=0}^{n-1}$ into \lw. \F~also inserts $\{w_j||R'_j,h'_j\}_{j=0}^{n-1}$ into $\lhpp$.		 \F~sets the counter $c\as 0$.
	\end{enumerate}
	
\end{enumerate}

\vspace{3pt}

$\bullet$~\underline{{\em Execute $(M^{*},\sigma^{*})\as \mathcal{A}^{\ro,\ulssig_{\sk}(\cdot)}(\pk)$}}: \F~handles \A's queries and forgery as follows:

\begin{enumerate}[-]
	\setlength{\itemsep}{1pt}
	\setlength{\parskip}{0pt}
	\setlength{\parsep}{0pt}
	
	\item  \underline{Queries of \A}: \A~queries \ro~and $\ulssig_{sk}(.)$ on any message of its choice up to $q_h$ and $\qs$ times, respectively. 
	
	\begin{enumerate}[1)]
		\itemsep 1pt
		\item {\em Handle \ro~queries}: \A~queries \ro~on a message $M$. \F~calls $h\as\hsim(M,\lhpp)$ and returns $h$ to \A.
		
		\item {\em Handle \ulssig$ (\cdot) $~queries}: \A~queries $\ulssig_{sk}\any$~on any message of its choice $M$. If $ M\in \lhp $ then \F~{\em aborts}. Otherwise, \F~inserts $M$ into \lm~and then continues as follows.
		\begin{enumerate}[i)]
			\setlength{\itemsep}{2pt}
			\setlength{\parskip}{0pt}
			\setlength{\parsep}{0pt}
			
			\item $x\as \hsim(z||c,\lh)$,   $(k_0,\ldots,k_{v-1})\as \hsim(z||x,\lhp)$, and fetch $\{s'_{k_i},h'_{k_i}\}_{i=0}^{v-1}$ from $\gamma$.
			
			\item $\oh\as\sum_{i=0}^{v-1} h'_{k_i} \bmod q$,  $\os\as\sum_{i=0}^{v-1} s'_{k_i} \bmod q$,   put $(M||x,\oh)$ in  $\lhpp$ and return $\sigma=(\os,x)$ to $ \mathcal{A} $.
			
		\end{enumerate}
	\end{enumerate}
	\vspace{1pt}
	
	\item \underline{Forgery of~\A}: Finally, $\mathcal{A}$ outputs a forgery for \pk~as $(M^{*},\sigma^{*})$, where $\sigma^{*}=(s^{*},x^*)$. By Definition \ref{def:EUCMA}, \A~wins the \EUCMA~experiment for \uls~if  $\ulsver(M^{*},\sigma^{*},\pk)=1$ and  $M^{*} \notin \lm$. 
	
\end{enumerate}

\vspace{3pt}
$\bullet$~\underline{{\em Forgery of \F}}:  If \A~fails, \F~also fails and returns $0$. Otherwise, given an \uls~forgery $(M^{*},\sigma^{*}=\langle s^{*},x^{*} \rangle)$ on \pk, \F~checks if  $M^{*}||x^{*} \notin \lhp$, or $x^{*} \notin \lh$   holds,  \F~{\em aborts}. Otherwise, using the forking lemma \cite{SecProofSigScheme96Euro,Bellare-Neven:2006}, \F~rewinds \A~with the same random tape, to get a new  forgery  	 $(M^{*},\tsigma =\langle  \tss,\txx \rangle)$. Given, both forgery pairs are valid, we have:   

	\setlength{\itemsep}{2pt}
	\setlength{\parskip}{0pt}
	\setlength{\parsep}{0pt}
	
\vspace{-5mm}
\begin{eqnarray*}
	R^{*} & \equiv & (\alpha^{y})^{h^{*}}\cdot \alpha^{s^{*}} \bmod p \\
	R^{*} & \equiv & (\alpha^{y})^{\thh}\cdot \alpha^{\tss} \bmod p
\end{eqnarray*}
These equations imply the below modular linear equations.
\begin{eqnarray*}
	r^{*} & \equiv & y\cdot h^{*} + s^{*} \bmod q \\
	r^{*} & \equiv & y\cdot \thh + \tss \bmod q
\end{eqnarray*}

\F~then extracts \sch~private key $y$ by solving the above modular linear equations (note that only unknowns are $y$ and $r^{*}$). \F~can further verify $Y \equiv \alpha^{y} \bmod p $. Given that \F~ extracted the \sch~private key, it is trivial to show that \F~can produce a valid forgery on any message $M^{**} \notin \lw $ of its choice on \sch~public key $Y$. Therefore, \F~wins \EUCMA~experiment for \sch~and returns 1.

\vspace{3pt}

\noindent\underline{\em{Probability Analysis  \& Transcript Indistinguishability}}: \F~may abort during simulation when \A~queries the \ulssig\any~oracle, if randomly chosen indexes $ (k_0,\dots,k_{v-1}) $ already exist in \lhp. The probability that happens is   $ \nicefrac{(q_h-1)q_s}{2^l} ,~l = v\cdot \log{n} $.  Based on \cite[Lemma 1]{Bellare-Neven:2006}, we define the success probability of  \A~as $ \texttt{ACC} $.  This probability  is defined as $ \texttt{ACC}  \geq \epsilon_\mathcal{A} - \nicefrac{(q_h-1)q_s}{2^l}  $. Where $ \epsilon_\mathcal{A} $ is the winning probability of  $ \mathcal{A} $.  The probability of \F~(i.e., $ \epsilon_\mathcal{F} $) for breaking \sch~is given by:
\vspace{-1mm}   
\begin{eqnarray*}
 \epsilon_\mathcal{F}  &  \geq  & \frac{\texttt{ACC} ^2}{q_h+q_s}-\frac{1}{2^{l}}\\
  & \geq & \frac{ \epsilon_{\mathcal{A}}^2 }{ (q_h+q_s)}-\frac{ 2((q_h-1)q_s)}{2^{l}(q_h+q_s)} 
\end{eqnarray*}

 $ \mathcal{A} $'s view in Algorithm \ref{alg:ULS_Generic} is the public key $ \pk $, signatures $(\sigma_1,\dots, \sigma_{q_s -1}) $ and hash outputs. The public key in the simulation has the identical distribution as the one in original \uls~- they are both the output of the \schkg\any~ algorithm. As for the signatures,  $ \sigma=( s,x ) $, both elements $ s $ and $ x $ have the same distribution as in the original scheme.

\end{document}